\newtheorem{theorem}{Theorem}[section]
\newtheorem{claim}{Claim}
\newtheorem{lemma}[theorem]{Lemma}
\newtheorem{corollary}[theorem]{Corollary}
\newcommand{\qed}{\mbox{\ \ \ }\rule{6pt}{7pt} \bigskip}
\renewcommand{\comment}[1]{}
\newenvironment{proof}{\noindent{\em Proof:}}{\hfill\qed}
\begin{document}
\markboth{G. Askalidis et al.}{Socially Stable Matchings}

\title{Socially Stable Matchings}
\author[1]{Georgios Askalidis}
\author[2]{Nicole Immorlica}
\author[1]{Emmanouil Pountourakis}
\affil[1]{Northwestern University}
\affil[2]{Microsoft Research \& Northwestern University}

\renewcommand\Authands{ and }

\maketitle
\begin{abstract}
In two-sided matching markets, the agents are partitioned into two sets. Each agent wishes to be matched to an agent in the other set and has a strict preference over these potential matches.   A matching is stable if there are no blocking pairs, i.e., no pair of agents that prefer each other to their assigned matches. 
In this paper we study a variant of stable matching motivated by the fact that, in most centralized markets, many agents do not have direct communication with each other.  Hence even if some blocking pairs exist, the agents involved in those pairs may not be able to coordinate a deviation.  We model communication channels with a bipartite graph between the two sets of agents which we call the {\it social graph}, and we study {\it socially stable matchings}.  A matching is socially stable if there are no blocking pairs that are connected by an edge in the social graph.  
Socially stable matchings vary in size and so we look for a maximum socially stable matching.  We prove that this problem is {\sf NP}-hard and, assuming the unique games conjecture, hard to approximate within a factor of $\frac{3}{2}-\epsilon$, for any constant $\epsilon$. We complement the hardness results with a $\frac{3}{2}$-approximation algorithm.

\end{abstract}

\section{Introduction}

The main functionality of many markets is to match agents to one another.  A labor market matches workers to firms; a marriage market matches men to women; an education market matches students to schools; an online marketplace matches buyers to sellers, and so on.  In these markets, agents have preferences over potential matches based on the characteristics of the match -- the experience and skill of a worker, the salary and location of a firm, the personality and physique of a potential spouse, the test scores of a student, etc.  Standard economic theory suggests that the equilibria matchings in such markets will exhibit a particular stability property.  Stable matchings, and those that will be produced as the outcome of matching markets, are matchings in which no pair of unmatched agents {\it block} the matching, i.e., prefer each other to their assigned spouse.  Intuitively, the rationale behind this prediction is that if two unmatched agents form a blocking pair, then they will deviate from the matching and thereby destroy it.  The seminal results of Gale and Shapley \cite{gale1962college} prove that stable matchings always exist and can be computed efficiently by a centralized algorithm.  Later work of Roth \cite{roth1984evolution}, \cite{roth1990new} showed that, remarkably, some centralized matching markets (most notably the labor market for medical interns organized by the National Residency Matching Program) do indeed employ algorithms similar to the one proposed by Gale and Shapley.  Furthermore, centralized markets that do not output stable matchings are statistically less likely to persist than their counterparts that do.  Various theoretical studies ~\cite{roth1990random}, \cite{ackermann2011uncoordinated}, \cite{abeledo1995paths} suggest that decentralized markets may converge to stable outcomes as well.

However, in many markets, the existence of certain blocking pairs seems implausible.  In particular, for a pair of agents to successfully block a matching, they must be aware of each other's existence, able to communicate the potential deviation, and able to enact it.  Each of these assumptions places fairly strong restrictions on the markets.  For example, in sports drafts, matchings of players to teams is typically implemented by a centralized unstable mechanism known as serial dictatorship.  Yet deviations are rarely observed as strong regulations prevent blocking pairs from enacting the deviations.  Sorority rush at certain colleges also employs a centralized unstable mechanism.  Although some theory~\cite{mongell1991sorority} suggests that agents manipulate the mechanism into producing stable matches by reporting altered preferences, another explanation for the success of these mechanisms is that potential blocking pairs have difficulty communicating the deviation: for a girl to switch sororities, it is extremely useful for her to have a connection, a friend say, in the sorority.  
In more anonymous settings, like online marketplaces or online dating platforms, agents may not even be aware of each others' existence.  Rather than having agents rank each other explicitly (which requires agents to know one-another), these markets employ centralized mechanisms which infers agents' preferences through various means (questionnaires, cookies on users' machines, etc.).  The mechanism then selects a matching based on these preferences.  Now whether this matching is stable depends on whether the potential blocking pairs are aware of each other's existence.  

Similar stories can be told for decentralized matching markets.  In decentralized labor markets, for example, it is well-known that workers find out about new jobs more through personal contacts than any other means (see, for example, the seminal study of Granovetter, \cite{granovetter1973strength}).  Taken to the extreme, this suggests that, in a fixed matching, the only feasible blocking agents for a given firm are the friends of the workers employed by that firm.  Motivated by this observation, Arcaute and Vassilvitskii~\cite{arcaute2009social} study what they call {\it locally stable matchings}, or matchings stable with respect to this feasibility constraint on blocking pairs.  Their work and followup work showed that locally stable matchings exist and certain random dynamics converge to these matchings in polynomial time with only limited memory~\cite{hoefer2011local}.  Furthermore, they observed that, unlike stable matchings, locally stable matchings can have different sizes.  A natural design goal is to maximize the number of matched agents.  Accordingly, followup work~\cite{cheng2012maximum} proved that stable matchings are $\frac{3}{2}$-approximations to maximum-cardinality locally stable matchings for a very restrictive special case in which workers that are ``popular'' (matched in every stable matching) have no friends that are ``unpopular'' (single in some stable matching) that are also interested in some of the same firms.  They also conjectured that there is a $\frac{3}{2}$-approximation for the general case.  Other work~\cite{hoefer2012locally} proved $\frac{3}{2}$ is the best-possible approximation factor assuming the unique games conjecture.  

In this paper, we focus on centralized matching markets and restrict feasible blocking pairs to be a subset of all pairs.  This subset is represented by a bipartite graph called the {\it social graph} and is meant to capture agents with mutual knowledge of each other and direct channels of communication.  For example, a given girl will be connected to all sororities in which she has friends; a given buyer will be connected to all sellers with whom he has conducted prior business transactions; a man in an online dating platform will be connected with all the women he knows socially outside of the platform.  We note that, especially for online platforms, this social graph is something the mechanism itself can infer: agents that are in touch with each other are those that have messaged each other on the site or engaged in prior transactions.  Our goal is to compute a matching with no blocking pairs that form an edge in the social graph.  We call such a matching a {\it socially stable matching}.  Clearly any stable matching is also socially stable.   However the converse is not true.  Furthermore, as is the case for locally stable matchings, not all socially stable matchings have the same size.  Thus we focus on computing maximum socially stable matchings.  In this paper, we provide a centralized algorithm that computes a socially-stable matching which is at $\frac{2}{3}$ the size of the maximum socially-stable matching, yielding a $\frac{3}{2}$-approximation.  We also show our problem is {\sf NP}-hard and hard to approximate to within $\frac{3}{2}-\epsilon$ assuming the widely-believed unique games conjecture.

Socially stable matchings are clearly related to the locally stable matchings studied in prior work~\cite{arcaute2009social,cheng2012maximum,hoefer2011local,hoefer2012locally}, but there are some subtle and important differences.  From a modeling perspective, locally stable matchings ask us to believe that friends communicate opportunities to each other although they are direct competitors.  In particular, why should an agent inform his friends of his match if his friend might form a blocking pair with his match and leave him single?  In socially stable matchings, the incentives of the communicating agents (pairs in the social graph) are aligned.  This modeling change also causes a key technical difference: social stability is fundamentally a static notion.  The feasible blocking pairs are defined with respect to the input and not with respect to the proposed matching as is the case for locally stable matchings.  Due to this technical difference, neither model can be reduced to the other.  Furthermore, the static properties of our model enable us to suggest an intuitive algorithmic technique: prioritize matches that appear as edges in the social graph.  It is this intuition that allows us to get a tight approximation result in our model despite the illusiveness of a similar result in the local stability model.

\section{Preliminaries}

The agents of a standard matching market are partitioned into two sets: a set of men $M$ and a set of women $W$. Each man $m\in M$ has a strict preference order $\succ_m$ over $W\cup\{m\}$, and each woman $w\in W$ has a strict preference order $\succ_w$ over $M\cup \{w\}$.  We say that an agent $a$ prefers $b$ to $b'$ if $b\succ_a b'$, and that $a$ prefers being unmatched to being matched with $b$ if $a\succ_a b$.  The set of agents $\{b:a\succ_a b\}$ are said to be {\it unacceptable} to $a$; the remaining agents $\{b:b\succ_aa\}$ are said to be {\it acceptable}.

In addition to this standard input, we augment the model to take into account social ties between the men and women.  We respresent these social ties by a bipartite graph $G=(M\cup W,E)$ which we call the {\it social graph}.  We say $(m,w)$ are friends if $(m,w)$ is an edge in $G$, and interpret this to mean that agents $m$ and $w$ have a direct channel of communication.  Note that there is no correlation between the social graph and the preferences: even if $(m,w)$ is not an edge in $G$, it can be that $m$ and/or $w$ prefer each other to being single and/or to being matched to other candidates.

A matching $\mu$ is a function mapping $M\cup W$ to itself such that $\mu(m)=w$ if and only if $\mu(w)=m$.  If for some agent $a$, $\mu(a)=a$, then we say that agent $a$ is {\it single} in matching $\mu$.  The cardinality of a matching is the number of matched (i.e., not-single) agents.  When convenient, we sometimes indicate a matching by listing the matched pairs; the unlisted agents are understood to be single.

A matching $\mu$ is {\it individually rational} if, for all agents $a$ such that $\mu(a)\not=a$, $\mu(a)\succ_a a$, i.e., $a$ prefers his or her assigned match to being single.   A pair $(m,w)$ are a {\it blocking pair} for a matching $\mu$ if $w\succ_m\mu(m)$ and $m\succ_w\mu(w)$, i.e., if they both strictly prefer each other to their assigned matches.  A matching is {\it stable} if it is individually rational and has no blocking pairs.  

It is known that stable matchings always exist; all stable matchings have the same cardinality; and that the stable matchings form a lattice structure.  In particular, there is a single stable matching $\mu_M$, called the {\it man-optimal} stable matching, that is preferred by all men to all other stable matchings $\mu$, i.e., $\forall m\in M$, either  $\mu_M(m)=\mu(m)$ or $\mu_M(m) \succ_m\mu(m)$.  Furthermore, this matching can be computed efficiently using the {\it man-proposing deferred-acceptance algorithm}~\cite{gale1962college}.   This algorithm proceeds as follows: the tentative matching $\mu$ is initialized to be the empty matching, i.e., for all agents $a$, $\mu(a)=a$.  In each round, a single man proposes to his most-preferred woman who has not yet rejected him.  When a woman $w$ receives a proposal from a man $m'$, she tentatively accepts if she prefers $m'$ to her current match $\mu(w)$ and rejects her current match, if any.  A more complete description of this algorithm can be found in~\cite{gale1962college}.  We use two key properties of this algorithm: 1) a woman is single at the end of the process if and only if she received no proposals, and 2) a man is single at the end of the process if and only if he has proposed to every acceptable woman on his preference list.

We are particularly interested in matchings which are stable with respect to the social ties.  We say a pair $(m,w)$ are a {\it social blocking pair} for a matching $\mu$ if $(m,w)$ are a blocking pair for $\mu$ and $(m,w)$ is an edge in the social graph $G$.  Similar to stable matchings, we define a matching $\mu$ to be {\it socially stable} if it is individually rational and has no social blocking pairs.  

This definition captures the idea that a blocking pair only constitutes a threat to the stability of a matching if the two agents have direct communication and are able to coordinate a deviation. It's straightforward to see that if $G$ is the empty graph then no pair can be socially-blocking and hence all matchings are socially stable. In contrast, if $G$ is the complete bipartite graph then the notion of social stability coincides with the traditional notion of stability.  In general, the set of socially stable matchings is a superset of the set of stable matchings.

In contrast with the case of stable matchings, not all socially-stable matchings have the same size. For example, given an empty graph, any matching is socially stable.   Hence, a natural optimization problem arises: Given a set of preferences $\{\succ_m \mid m\in M\}$ and $\{\succ_w \mid w\in W\}$, and a social graph $G=(M\cup W, E)$, find a maximum-cardinality socially stable matching. We call this the \textsc{Soc-Stable} problem.  As we will see, this problem is {\sf NP}-hard, and so we develop approximation algorithms for it.  An algorithm $A$ is an $\alpha$-approximation for the \textsc{Soc-Stable} problem if, on every instance, it outputs a matching $\mu_A$ whose cardinality $|\mu_A|$ is at least $\frac{1}{\alpha}|\hat\mu|$ for the maximum socially stable matching $\hat\mu$.

\paragraph{Example}
Consider an instance $I=(P, G=(M\cup W, E))$ with two men, $m_1$ and $m_2,$ and two women, $w_1$ and $w_2$ with preferences as shown in Figure 1.
\begin{figure}[h!]\label{example}
\setlength{\unitlength}{1.2cm}
\centering

    \begin{center}
\begin{tikzpicture}
\tikzstyle{every node}=[draw,circle,fill=black,minimum size=3pt,
                            inner sep=0.3pt]
                            
\draw (0,2) node (m1) [label=above left:$m_1$]{};
\draw (2,2) node (w1) [label=above right:$w_1$]{};
\path (m1) edge (w1);

\draw (0,1) node (m2) [label=below left:$m_2$]{};
\draw (2,1) node (w2) [label=below right:$w_2$]{};
\path (m2) edge (w2);

\tikzset{every node/.style={fill=white}} 

\node[anchor=north west,text width=3cm] (note2) at (-5,2.3)
   {$\displaystyle \succ_{m_1}: w_2, w_1, m_1$\\
$\displaystyle \succ_{m_2}: w_2, w_1, m_2$\\
$\displaystyle \succ_{w_1}: m_1, w_1$\\
$\displaystyle \succ_{w_2}: m_1, m_2, w_2$
}; 
\end{tikzpicture}
\end{center}







\caption{Example of an instance for \textsc{Soc-Stable}}
\end{figure}
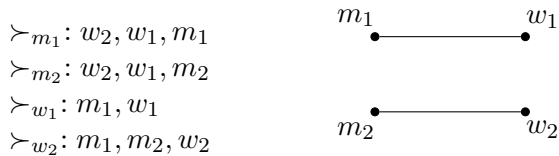
Let $G$ be such that $E(G)=\{ (m_1,w_1), (m_2,w_2)\}$. Then the only stable matching $\mu$ for this instance is $\mu=\{(m_1,w_2)\}$. As $\mu$ is stable, it is also socially stable, and so there is a socially stable matching of cardinality one. However, there is a second larger socially stable matching $\hat{\mu}$ of cardinality two which matches $\hat\mu=\{(m_1,w_1), (m_2,w_2)\}$. Indeed $m_1$ and $w_2$ prefer each other to their assigned matches under $\hat{\mu}$ and so $\hat\mu$ is not stable, but since $(m_1,w_2)\not\in E(G)$ they are not a social blocking pair and so $\hat\mu$ is socially stable.

\section{Approximation Results}
\label{sec:approx}

As shown in  Section~\ref{sec:hardness}, the \textsc{Soc-Stable} problem is {\sf NP}-hard and $\frac{3}{2}$-inapproximable assuming the widely-believed unique games conjecture.  Furthermore, there is a trivial $2$-approximation for the problem: simply compute a stable matching and return that.  Since a stable matching must be {\it maximal} with respect to the restrictions imposed by individual rationality, it is a $2$-approximation to the {\it maximum} individually rational matching which, in turn, is an upper-bound on the cardinality of the maximum socially stable matching.  

In this section, we develop a polynomial time algorithm that is a $\frac{3}{2}$-approximation.  The main idea behind the algorithm is to alter each woman's preference and give low priority to all the men that she is not connected with in the social graph, no matter their position in her preference list. We then use an idea introduced by Kir\'aly~\cite{kiraly2011better}: we first compute the man-optimal stable matching.  We then give a second chance to men that are single in this matching by promoting them in the women's preferences and allowing them to re-initiate their proposal process.

To formally define the algorithm, consider an instance of the \textsc{Soc-Stable} problem with men $M$, women $W$, preferences $\{\succ_m|m\in M\}$ and $\{\succ_w|w\in W\}$, and social graph $G = (M\cup W, E)$.  For each woman $w\in W$, we let $T_w$ denote the set of men that woman $w$ has promoted to the top of her preference list.  Throughout the algorithm, we maintain the invariant that $T_w$ is a prefix of $w$'s preference list.  We initiate $T_w=\{m:(m,w)\in E(G)\}$ and create a corresponding preference list $\succ'_w$ by promoting all men in $T_w$ to the top of $w$'s preference, preserving the relative order.  More formally, for all acceptable men $m$ and $m'$, we define $m\succ'_wm'$ if and only if one of the following three conditions holds: 1) $m\succ_wm'$ and $(m,w)\in E(G)$ and $(m',w)\in E(G)$, 2) $m\succ_wm'$ and $(m,w)\not\in E(G)$ and $(m',w)\not\in E(G)$, or 3) $(m,w)\in E(G)$ and $(m',w)\not\in E(G)$ (the set of unacceptable men and their relative ordering remains unchanged).  Note that, due to the third condition, it might be the case that $m\succ'_wm'$ while $m'\succ_wm$.  

Our algorithm first computes a man-optimal stable matching $\mu_M$ for a market with preferences $\{\succ_m\}$ for the men and $\{\succ'_w\}$ for the women.  We initialize our tentative socially stable matching $\mu$ to be this man-optimal stable matching $\mu_M$.  We then give every single man a second chance.  To this end, we introduce a variable  $s_m\in\{0,1\}$ that tracks whether a man has received a second chance and initialize $s_m:=0$ for all $m$.  We repeatedly select a currently single man $m^*$ with $s_{m^*}:=0$, set $s_{m^*}:=1$ and $T_w:=T_w\cup\{m^*\}$ for all women $w$, and promote $m^*$ in the women's preferences by redefining $\succ'_w$ to be $m\succ'_wm^*$ if and only if $m\succ_wm^*$ and $m\in T_w$ (the preference order $\succ'_w$ is unchanged on all other comparisons).  We then set the tentative matching $\mu$ to be the new man-optimal stable matching with respect to these altered preferences.  When the algorithm terminates, we return $\mu$ as the socially stable matching.  A pseudo-code description of the algorithm appears above.

\begin{algorithm}[t]
1. $\forall\ w\in W$, set $T_w=\{m:(m,w)\in E(G)\}$.\\
2. $\forall\ w\in W$ and $\forall\ m,m'\in M : m\succ_ww\wedge m'\succ_ww$ define $m\succ'_wm'$ if and only if\\
\ \ \ \ \ a. $m\succ_wm'$ and $(m,w)\in E(G)$ and $(m',w)\in E(G)$,\\
\ \ \ \ \ b. $m\succ_wm'$ and $(m,w)\not\in E(G)$ and $(m',w)\not\in E(G)$,\\
\ \ \ \ \ c. or $(m,w)\in E(G)$ and $(m',w)\not\in E(G)$.\\
For all remaining pairs, let $\succ'=\succ$.\\
3. For every man $m$, set $s_m=0$.\\
4. Let $\mu$ be the outcome of the man-proposing deferred acceptance algorithm.\\
\While{$\exists\ m^*:\mu(m^*)=m^*\wedge s_{m^*}=0$}
{
	 1. Set $s_{m^*}:=1$ and, $\forall\ w\in W$, set $T_w:=T_w\cup\{m^*\}$.\\
	 2. $\forall\ w\in W$ and $\forall\ m\in T_w$ set $m^*\succ'_wm$ if and only if $m^*\succ_wm$.\\
	 3. $\forall\ w\in W$ and $\forall\ m\not\in T_w$ set $m^*\succ'_wm$.\\
	 4. Let $\mu$ be the outcome of the man-proposing deferred acceptance algorithm.\\
}
\caption{socGS}
\label{socgs}
\end{algorithm}

Note that this algorithm terminates in polynomial time as we run the polynomial-time man-proposing deferred-acceptance algorithm at most $|M|+1$ times.  We would like to prove two additional properties of this algorithm.  First, we show that it computes a socially stable matching.  Second, we show the socially stable matching is a $\frac{3}{2}$-approximation to the maximum socially stable matching.

\begin{theorem}\label{strictsoc}
Algorithm socGS computes a socially stable matching.
\end{theorem}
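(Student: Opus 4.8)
The plan is to analyze the algorithm through the \emph{modified market} in which the men keep their original preferences $\{\succ_m\}$ and each woman $w$ uses the final altered list $\succ'_w$ that exists when the algorithm halts. By construction the returned matching $\mu$ is the output of man-proposing deferred acceptance on this modified market (the last execution of line~4), so $\mu$ is a \emph{stable} matching of the modified market; in particular it is individually rational there and admits no blocking pair with respect to $\{\succ_m\}$ and $\{\succ'_w\}$. I would then transfer both guarantees back to the original preferences and the social graph: first individual rationality, then the absence of social blocking pairs.

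For individual rationality, I would observe that the construction of $\succ'_w$ (line~2 and the while-loop) only ever reorders men who are acceptable to $w$ under $\succ_w$, leaving the set $\{m : m\succ_w w\}$ and the position of $w$ herself intact (``for all remaining pairs, $\succ'=\succ$''). Hence a man is acceptable to $w$ under $\succ'_w$ exactly when he is acceptable under $\succ_w$. Since $\mu$ is individually rational in the modified market and the men's lists are untouched, every matched agent prefers its partner to being single under the original preferences as well.

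The heart of the argument is ruling out social blocking pairs, and the key is a structural invariant on $\succ'_w$ that I would prove by induction over the iterations of the while-loop: for every woman $w$ and all men $m_1,m_2$ acceptable to $w$, (i) if $m_1\in T_w$ and $m_2\notin T_w$ then $m_1\succ'_w m_2$, and (ii) if $m_1,m_2\in T_w$ then $m_1\succ'_w m_2\iff m_1\succ_w m_2$. The base case is the three defining conditions of line~2: condition~(c) gives (i), while conditions~(a),(b) give the honest internal order~(ii). Each loop iteration inserts the new man $m^*$ into $T_w$ at his true rank (loop-line~2) and places him above every man outside $T_w$ (loop-line~3), while disturbing no comparison between two previously-present members, so both parts of the invariant persist. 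I would also note the trivial fact that $(m,w)\in E(G)$ forces $m\in T_w$ throughout, since $T_w$ is initialized to $w$'s social neighbors and only grows.

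Now suppose, toward a contradiction, that $(m,w)$ is a social blocking pair of $\mu$ in the original market: $(m,w)\in E(G)$, $w\succ_m\mu(m)$, and $m\succ_w\mu(w)$. Individual rationality together with $m\succ_w\mu(w)$ makes $m$ acceptable to $w$, so $m\succ'_w w$. Since $\mu$ is stable in the modified market and the men's lists are unchanged, $w\succ_m\mu(m)$ forces $\mu(w)\succ'_w m$; combined with $m\succ'_w w$ this shows $\mu(w)\succ'_w w$, so $w$ is matched to some man $m'=\mu(w)$. As $(m,w)\in E(G)$ gives $m\in T_w$, I would split on $m'$: if $m'\notin T_w$, invariant~(i) yields $m\succ'_w m'$, contradicting $m'\succ'_w m$; if $m'\in T_w$, invariant~(ii) gives $m'\succ_w m$, contradicting $m\succ_w\mu(w)=m'$. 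Either branch is a contradiction, so $\mu$ has no social blocking pair and is socially stable. I expect the main obstacle to be pinning down and maintaining the invariant cleanly through the second-chance loop---in particular, checking that promoting a man $m^*$ (possibly into $T_w$ even for a woman to whom he is not connected) neither breaks the ``$T_w$-above-the-rest'' ordering of~(i) nor corrupts the honest internal order~(ii), since it is precisely property~(ii) applied to the \emph{connected} man $m$ that makes a social blocking pair impossible.
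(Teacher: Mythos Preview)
Your proof is correct and follows essentially the same approach as the paper's: both use that $\mu$ is stable in the modified market and then argue, for a putative social blocking pair $(m,w)$, that $m\succ'_w\mu(w)$ must hold, contradicting that stability. The difference is cosmetic: you make the underlying invariant on $\succ'_w$ (items~(i) and~(ii)) explicit and split on whether $\mu(w)\in T_w$ at termination, whereas the paper splits on whether $(\mu(w),w)\in E(G)$ and handles the possible later promotion of $\mu(w)$ inline; your packaging is somewhat cleaner and also addresses individual rationality, which the paper leaves implicit.
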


\begin{proof}
Let $\mu$ be the matching output by the algorithm and note that, by the properties of the deferred-acceptance algorithm, $\mu$ is stable with respect to the men's preferences $\succ_m$ and the women's final altered preferences $\succ'_w$.  Suppose that $\mu$ is not socially stable with respect to the true preferences $\{\succ_m,\succ_w\}$ and let $(m,w)$ be a socially blocking pair.  Then $w\succ_m\mu(m)$, $m\succ_w\mu(w)$, and $(m,w)\in E(G)$.   

We consider two cases based on whether $(\mu(w),w)\in E(G)$.  First suppose $(\mu(w),w)\in E(G)$.  Then at any point in the algorithm, $\succ_w$ and $\succ'_w$ agree on the ordering of $m$ and $\mu(w)$.  Thus $w\succ_m\mu(m)$ and $m\succ'_w\mu(w)$, contradicting the stability of $\mu$.

Next suppose $(\mu(w),w)\not\in E(G)$.  Then the algorithm initializes $m\succ'_w\mu(w)$.  The algorithm may later promote $\mu(w)$, but only to his position in the original preferences, and so even after promotion $m\succ'_w\mu(w)$.  This again contradicts the stability of $\mu$.
\end{proof}

We next argue that the algorithm is a $\frac{3}{2}$-approximation.  The proof follows a standard technique also used by Iwama et al. ~\cite{iwama20071} in the context of stable matchings: to prove a matching $\mu$ is a $\frac{3}{2}$-approximation to a matching $\hat\mu$, consider the multi-graph induced by the union $\mu\cup \hat{\mu}$ and note that every component consists of cycles or \textit{alternating paths}, i.e. paths whose edges belong to $\mu$ and $\hat{\mu}$ alternately. To prove that $|\mu|\geq\frac{2}{3}\hat\mu$, it is enough to show that there is no alternating path of size three that starts and ends with an edge of $\hat{\mu}$ (as in all other possible components the inequality holds).

\begin{theorem} Let $\mu$ be the outcome of Algorithm socGS and $\hat{\mu}$ be the maximum socially stable matching. Then $|\mu|\geq \frac{2}{3}|\hat{\mu}|$.
\end{theorem}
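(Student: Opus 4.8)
The plan is to follow the alternating-path recipe stated just before the theorem. Form the multigraph $\mu\cup\hat\mu$, note that every connected component is a cycle or an alternating path, and observe that in every component other than an alternating path of exactly three edges that both begins and ends with a $\hat\mu$-edge, the local inequality $|\mu\cap C|\ge\frac{2}{3}|\hat\mu\cap C|$ already holds (cycles give ratio $1$, and an alternating path of odd length $2k+1$ with $\hat\mu$-edges at both ends gives $\frac{k}{k+1}\ge\frac23$ once $k\ge2$). So it suffices to rule out such a ``bad'' path. Because the graph is bipartite and the path has odd length, its two endpoints lie on opposite sides and are both single in $\mu$; I write it as $w$--$m$--$w'$--$m'$ with $(w,m),(w',m')\in\hat\mu$, $(m,w')\in\mu$, where the woman $w$ and the man $m'$ are single in $\mu$. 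The goal is then to produce a social blocking pair for $\hat\mu$, contradicting its social stability.

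First I would extract information from the single woman $w$. Since $\mu$ is man-optimal stable for the final preferences and $w$ is single, she received no proposals, so her $\hat\mu$-partner $m$ never proposed to her; as $m$ proposes in his unchanged true order $\succ_m$ and ends matched to $w'=\mu(m)$, this forces $w'\succ_m w$. Next I use the single man $m'$. The while loop terminates only when every single man has taken his second chance, so $s_{m'}=1$ and $m'$ sits in the top block $T_w$ of every woman. Being single, $m'$ proposed to every acceptable woman and was rejected by all, in particular by $w'$ (acceptable since $\hat\mu$ is individually rational), who keeps $m=\mu(w')$. Since $m'\in T_{w'}$, for $w'$ to retain $m$ over $m'$ we must have $m\in T_{w'}$ too (otherwise $m'$, being in the top block, would outrank $m$ under $\succ'_{w'}$), and then $m\succ_{w'}m'$ in the true preferences, because $\succ'_{w'}$ agrees with $\succ_{w'}$ inside the top block. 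Combining the two facts, $(m,w')$ blocks $\hat\mu$: $w'\succ_m w=\hat\mu(m)$ and $m\succ_{w'}m'=\hat\mu(w')$.

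It remains to argue that this blocking pair is social, i.e. $(m,w')\in E(G)$. Since $m\in T_{w'}$ and $T_{w'}$ is exactly the $G$-neighbours of $w'$ together with all men who have received a second chance, either $(m,w')\in E(G)$, in which case we are done, or $m$ also received a second chance ($s_m=1$). In the latter case the plan is to return to the iteration in which $m$ was promoted: just before that promotion $m$ was single, so in the preceding run he proposed to every acceptable woman, including $w$, whence $w$ received a proposal and was matched in that intermediate matching. To contradict ``$w$ single in the final $\mu$'' I will prove the key monotonicity lemma: promoting one man to the top block of every woman and recomputing the man-optimal stable matching never turns a matched woman into a single one; iterating the lemma shows that once $w$ is matched she remains matched through all later iterations, contradicting her being single in $\mu$.

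I expect this monotonicity lemma to be the main obstacle, and I would prove it as follows. Let $\mu_0$ be stable for the current profile $P$ and let $P'$ promote a single man $m^*$. First one checks that under $P'$ the only pairs that can block $\mu_0$ are those involving $m^*$: a newly blocking pair would require some woman to flip her comparison of two men, and since promotion only raises $m^*$ and demotes nobody, no comparison can flip in favour of a man other than $m^*$. Thus $\mu_0$ is restored to stability for $P'$ by letting $m^*$ run his proposal chain on top of it (the standard single-man insertion in deferred acceptance). During such a chain a matched woman only ever trades up and never becomes single, so the set of matched women is non-decreasing and the process ends at a matching $\mu''$ stable for $P'$ whose matched-women set contains that of $\mu_0$. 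Finally, by the Rural Hospitals property for $P'$, the man-optimal stable matching $\mu'$ of $P'$ matches exactly the same set of women as $\mu''$, hence contains every woman matched in $\mu_0$. This establishes the lemma and closes the remaining case, completing the argument that no bad length-three path exists and therefore that $|\mu|\ge\frac{2}{3}|\hat\mu|$.
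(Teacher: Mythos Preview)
Your proposal is correct and follows essentially the same approach as the paper: rule out the bad length-three alternating path by showing that the middle edge $(m,w')$ is a social blocking pair for $\hat\mu$, using (i) the single endpoint woman to get $w'\succ_m w$, (ii) the single endpoint man's second chance to get $m\in T_{w'}$ and $m\succ_{w'} m'$, and (iii) the monotonicity of matched women across iterations to conclude that $m$ himself never received a second chance, so $m\in T_{w'}$ forces $(m,w')\in E(G)$. The only difference is organizational---the paper uses monotonicity up front to show directly that $m$ was never single, whereas you arrive at the same conclusion via a case split; your Rural-Hospitals argument for the monotonicity lemma is in fact a bit more careful than the paper's informal justification.
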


\begin{proof}
Suppose that there exists an alternating path $(w^*,m,w,m^*)$ in the multi-graph induced by the union $\mu\cup\hat{\mu}$ such that $(m,w^*)\in\hat\mu$, $(m,w)\in\mu$, and $(m^*,w)\in\hat\mu$. We will contradict the social stability of $\hat\mu$.  

Note both $w^*$ and $m^*$ are single in $\mu$.
Since $w^*$ is single at the end of the algorithm, we show that  she never received an offer in {\it any} iteration of the man-proposing deferred-acceptance algorithm. We argue that every woman's match can only improve in subsequent iterations. This is because  the outcome of man-proposing deferred acceptance algorithm, without including the man that was given as second chance at this iteration, will be exactly the outcome of the previous iteration since he ended up single. Therefore, since it is known that the order of proposals does not matter, we can assume that we first produce the matching of the previous iteration by letting these men propose and then we allow the man that was given a second chance to start proposing. It is known also that the women are weakly improved at any step of the man-proposing deferred acceptance algorithm, which implies that they are at least as happy as the previous iteration. Hence, $w^*$ was single at every iteration which only possible if she received no offer in any iteration.  This implies that  $m$ never proposed to $w^*$ even though she was acceptable to him and he to her (by the individual rationality of $\hat\mu$.  This means that 
\begin{equation}\label{eq2}m\mathrm{\ was\ never\ single,}\end{equation}
and 
\begin{equation}\label{eq1}w\succ_mw^*.\end{equation}
Furthermore, since $m^*$ is single at the end of the algorithm, at some point $m^*$ was given a second chance (i.e., selected as the single man in the while-loop of the algorithm).  At this point, he was promoted in woman $w$'s list, proposed to her, and was still rejected during that iteration of the man-proposing deferred-acceptance algorithm.  As women's tentative matches only improve in this algorithm, this means that $m\succ'_wm^*$.  Thus since $m^*$ was added to $T_w$ in this iteration and $T_w$ is a prefix of $w$'s list,
\begin{equation}\label{eq3}m\in T_w,\end{equation}
and 
\begin{equation}\label{eq4}m\succ_wm^*.\end{equation}
Finally, we observe that, since $m$ was never single and hence never chosen in the while-loop of the algorithm, and yet $m \in T_w$, it must be that $m$ was added to $T_w$ in the initialization proving $(m,w)\in E(G)$.  We also argued that $w\succ_mw^*$ and $m\succ_wm^*$, and so $(m,w)$ is a social blocking pair for $\hat\mu$, contradicting the social stability of $\hat{\mu}$.
\end{proof}

\paragraph{Example} The example below shows that this upper bound of the performance of Algorithm socGS is tight.
 
\begin{center}
\begin{tikzpicture}
\tikzstyle{every node}=[draw,circle,fill=black,minimum size=3pt,
                            inner sep=0.3pt]
\draw (0,3) node (m1) [label=above left:$m_1$]{};
\draw (2,3) node (w1) [label=above right:$w_1$]{};

\draw (0,2) node (m2) [label=above left:$m_2$]{};
\draw (2,2) node (w2) [label=above right:$w_2$]{};

\path (m1) edge (w1);
\draw (0,1) node (m3) [label=below left:$m_3$]{};
\draw (2,1) node (w3) [label=below right:$w_3$]{};
\path (m3) edge (w3);
\path (m2) edge (w3);
\tikzset{every node/.style={fill=white}} 

\node[anchor=north west,text width=4cm] (note2) at (-5,3.3)
   {$\displaystyle \succ_{m_1}: w_1, w_2, m_1$\\
$\displaystyle \succ_{m_2}: w_1, w_3, w_2,m_2$\\
$\displaystyle \succ_{m_3}: w_3,m_3$\\ \ \\
$\displaystyle \succ_{w_1}: m_2,m_1,w_1$\\
$\displaystyle \succ_{w_2}: m_1, w_2$\\
$\displaystyle \succ_{w_3}: m_2, m_3, w_3$
}; 
\end{tikzpicture}
\end{center}

Algorithm socGS will initiate preferences of the women as follows:
$$\succ'_{w_1}: m_1, m_2, w_1, \:\:  \succ'_{w_2}: m_1,w_1,\:\text{ and }\:\succ'_{w_3}: m_2,m_3,w_1$$
Then $m_1$ will propose to $w_1$ and they will be matched. $m_2$ will propose to $w_1$ and will be rejected since $m_1\succ'_{w_1} m_2$, so $m_2$ will propose to $w_3$ and they will be matched. After that, $m_3$ will propose to $w_3$, will be rejected and hence remain single at the end of the first run of the men-proposing Gale-Shapley algorithm. So algorithm socGS will promote him in the lists of all women that find him acceptable, set $s(m_3)=1$ and reactivate him. $m_3$ will propose to $w_3$ but she will reject him again since she prefers $m_2$ even in the new preferences. So the matching that socGS will return will be $\mu=\{(m_1,w_1), (m_2,w_3) \}$ which has size two. It's easy to see that $\hat{\mu}=\{(m_1, w_2), (m_2, w_1), (m_3, w_3)\}$ is the maximum socially stable matching and has size 3.

\section{{\sf NP}-Hardness and Hardness of Approximation}
\label{sec:hardness}

\par Having already shown that a $\frac{3}{2}$-approximation ratio is possible within polynomial time, we prove here a matching lower bound. We show that assuming the unique games conjecture, the \textsc{Soc-Stable} problem cannot be approximated in polynomial time within a ratio of $\frac{3}{2}-\epsilon$, for any constant $\epsilon$. We do that by showing a reduction from the notoriously {\sf NP}-hard problem of Independent Set. The reduction will give us as an obvious consequence that \textsc{Soc-Stable} is {\sf NP}-hard.

\par The Independent Set problem takes as an instance a pair $(G,k)$ of a graph and an integer $k\geq 0$ and asks if the graph $G$ contains a set $S\subseteq V(G)$ such that $|S|=k$ and for every $v_1, v_2\in S, (v_1,v_2)\notin E(G)$, i.e. no two nodes of $S$ are connected with an edge. It was one of the first problems that was proven to be {\sf NP}-hard.

\par Our reduction follows very closely the reduction that Hoefer and Wagner~\cite{hoefer2012locally} from Independent Set to the problem of calculating the maximum locally stable matching. Although the basic idea of the construction is the same, we need to adjust the details as well as the proofs to our problem.

\par Given a graph $G=(V, E)$ we create an instance $I=(\{\succ_m\},\{\succ_w\},G'=(M\cup W, E))$ of the \textsc{Soc-Stable} matching problem as follows: First we enumerate (arbitrarily but consistently) all the nodes of $G: v^1, v^2,\ldots, v^n$, where $n=|V(G)|$. For every $v^i\in V(G)$ we create four vertices of $G'$: $m_{v_1^i}, m_{v_2^i}\in M$ and $w_{v_1^i}, w_{v_2^i}\in W$. For every $v^i\in V(G)$ we denote by $N_G(v^i)=\{v^j \mid (v^i,v^j)\in E(G)\}$ the neighborhood of $v^i$ in $G$, i.e. the nodes that $v^i$ is connected with in $G$. Similarly, for a set $S\subset V(G)$ we denote by $N_G(S)=\bigcup_{v^i\in S}N_G(v^i)$ the union of the neighborhoods of the nodes in $S$.

\begin{figure}[h!]\label{example}
\setlength{\unitlength}{1.2cm}
    \begin{center}
\begin{tikzpicture}[>=stealth',shorten >=1pt,node distance=2cm,on grid,initial/.style    ={}]
\tikzstyle{every node}=[draw,circle,fill=black,minimum size=3pt,
                            inner sep=0.3pt]


\draw (0,2) node (m11) [label=above left:$m_{v^1_1}$]{};
\draw (2,2) node (w11) [label=above right:$w_{v^1_1}$]{};

\draw (0,1) node (m12) [label=below left:$m_{v^1_2}$]{};
\draw (2,1) node (w12) [label=below right:$w_{v^1_2}$]{};

\draw (4,2) node (m21) [label=above left:$m_{v^2_1}$]{};
\draw (6,2) node (w21) [label=above right:$w_{v^2_1}$]{};

\draw (4,1) node (m22) [label=below left:$m_{v^2_2}$]{};
\draw (6,1) node (w22) [label=below right:$w_{v^2_2}$]{};

    \tikzset{every node/.style={fill=white}} 

  \path (m11)  edge [-,]   (w22);
  \path (m21)  edge [-,]   (w12);
\end{tikzpicture}
\end{center}
\caption{Reduction from Independent Set: The gadget of some edge ($v^1,v^2)$}
\end{figure}
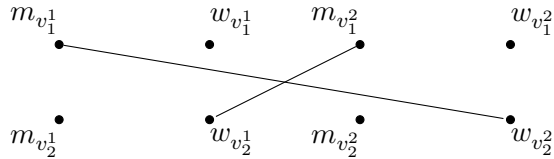

For every $v^i\in V(G)$ and every $v^j\in N_G(v^i)$ we add the edge $\{m_{v_1^i}, w_{v_2^j}\}$ to $E(G')$. See Figure 2 for an example of the transformation for an edge $(v^1,v^2)$. The preferences of the agents are as follows:\\

\begin{quote}
$\succ_{m_{v_1^i}}: w_{v_2^i}, \{w_{v_2^j}$ in increasing order on $j \mid v^j\in N_G(v^i)\}, w_{v_1^i}, m_{v_1^i}$\\
$\succ_{w_{v_2^i}}: m_{v_1^i}, \{m_{v_1^j}$ in increasing order on $j \mid v^j\in N_G(v^i)\}, m_{v_2^i}, w_{v_2^i}$\\
$\succ_{m_{v_2^i}}: w_{v_2^i}, m_{v_2^i}$\\
$\succ_{w_{v_1^i}}: m_{v_1^i}, w_{v_1^i}$\\
\end{quote}

\noindent
Notice that the enumeration of the vertices of $V(G)$ that we did in the beginning of our reduction shows up here, in the preferences of $m_{v_1^i}$ and $w_{v_2^i}$.   We prove the following two lemmas for the above construction.

\begin{lemma}\label{istosocstable}
If $G$ has an independent set $S$ of size $r$ then $I$ has a socially stable matching $\mu$ of size $n+r$.
\end{lemma}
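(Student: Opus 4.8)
The plan is to build the socially stable matching $\mu$ directly from the independent set $S$, handling each vertex-gadget in one of two ways according to whether its vertex lies in $S$. For a vertex $v^i \notin S$ I would match only the mutually top-ranked pair $(m_{v_1^i}, w_{v_2^i})$, leaving $m_{v_2^i}$ and $w_{v_1^i}$ single; this contributes one edge to $\mu$. For a vertex $v^i \in S$ I would instead match both within-gadget pairs $(m_{v_1^i}, w_{v_1^i})$ and $(m_{v_2^i}, w_{v_2^i})$, contributing two edges. This is the only way to match all four agents of a gadget internally, since $m_{v_2^i}$ finds only $w_{v_2^i}$ acceptable, which forces $m_{v_1^i}$ onto $w_{v_1^i}$.

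Counting is then immediate: each of the $n-r$ vertices outside $S$ yields one edge and each of the $r$ vertices in $S$ yields two, so $|\mu| = (n-r) + 2r = n+r$. Individual rationality also follows by inspection, since every matched pair is acceptable to both of its agents (each $(m_{v_1^i}, w_{v_2^i})$ consists of mutual top choices, and $(m_{v_1^i}, w_{v_1^i})$ and $(m_{v_2^i}, w_{v_2^i})$ are acceptable directly from the preference lists).

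The heart of the argument, and the step I expect to demand the most care, is showing that $\mu$ has no social blocking pair. The crucial simplification is that the social graph $G'$ contains only the edges $\{m_{v_1^i}, w_{v_2^j}\}$ with $v^j \in N_G(v^i)$, so these are the sole candidate blocking pairs; the agents $m_{v_2^i}$ and $w_{v_1^i}$ carry no social edges and can be disregarded entirely. For a fixed social edge $\{m_{v_1^i}, w_{v_2^j}\}$ I would determine exactly when it blocks. If $v^i \notin S$, then $m_{v_1^i}$ is matched to his top choice $w_{v_2^i}$ and so cannot strictly prefer anyone; symmetrically, if $v^j \notin S$, then $w_{v_2^j}$ is matched to her top choice $m_{v_1^j}$ and cannot strictly prefer anyone. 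Hence such an edge can block only when both $v^i \in S$ and $v^j \in S$.

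But $v^j \in N_G(v^i)$ means $\{v^i, v^j\}$ is an edge of $G$, and the independence of $S$ precludes both endpoints from lying in $S$. Consequently no social edge can block $\mu$, so $\mu$ is socially stable, which completes the proof. The only subtlety to watch is confirming, from the explicit preference lists, both "top choice'' facts above and that in the remaining case the relevant agent does \emph{strictly} prefer the partner across the social edge; the latter holds because the neighbor-indexed women (resp.\ men) are ranked above the within-gadget fallback partners $w_{v_1^i}$ (resp.\ $m_{v_2^i}$). This case bookkeeping is where I would be most careful.
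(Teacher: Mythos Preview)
Your proposal is correct and follows exactly the same approach as the paper: you construct the identical matching $\mu$ (one pair per vertex outside $S$, two pairs per vertex in $S$), obtain the same size count $n+r$, and then rule out social blocking pairs by observing that the only social edges are $\{m_{v_1^i}, w_{v_2^j}\}$ with $v^j\in N_G(v^i)$ and that at least one endpoint of any such edge is matched to its top choice unless both $v^i,v^j\in S$, which independence forbids. Your write-up is in fact more detailed than the paper's terse version, but the construction and the key idea are the same.
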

\begin{proof}
We take $\mu=\{(m_{v_1^i}, w_{v_2^i})\mid v^i\in V(G)\setminus S\}\cup \{(m_{v_1^i}, w_{v_1^i}), (m_{v_2^i}, w_{v_2^i})\mid v^i\in S\}$. We obviously have that $|\mu|=n-|S|+2\cdot |S|=n+r$.
The pairs $(m_{v_1^i}, w_{v_2^i})$ are always socially stable. For the rest of the pairs, the independent set property guarantees that for every $v^j\in N_G(S)$, the matching of $(m_{v_1^j}, w_{v_2^j})$ keeps $(m_{v_1^i}, w_{v_1^i})$ and $(m_{v_2^i}, w_{v_2^i})$ socially stable. 
\end{proof}

The reverse direction is a little more technical.
\begin{lemma}\label{socstabletois}
If $I$ has a socially stable matching of size $n+r$ then $G$ has an independent set of size $r$.
\end{lemma}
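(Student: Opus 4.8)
The plan is to extract the independent set from the gadgets that $\mu$ ``doubles'', and to verify independence by producing a social blocking pair whenever two such gadgets are adjacent. Throughout, $\mu$ is the given socially stable matching and $|\mu|=n+r$ counts matched pairs, as in the proof of Lemma~\ref{istosocstable}. Define $S=\{v^i\in V(G) : (m_{v_2^i},w_{v_2^i})\in\mu\}$. I would first prove the size bound $|S|\ge r$ by a counting argument. Since $m_{v_2^i}$ is acceptable only to $w_{v_2^i}$, the man $m_{v_2^i}$ is matched if and only if $v^i\in S$. Hence the number of matched pairs equals the number of matched men, which is $|\{i : m_{v_1^i}\text{ is matched}\}|+|S|\le n+|S|$. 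As $|\mu|=n+r$, this forces $|S|\ge r$. Because every subset of an independent set is independent, it then suffices to prove that $S$ itself is independent, and any $r$ of its vertices give the desired set.

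The heart of the argument is the following consequence of social stability, which is exactly where the increasing-index enumeration of the preference lists is used. Fix $v^a\in S$ and any neighbor $v^c\in N_G(v^a)$. Because $v^a\in S$, the woman $w_{v_2^a}$ is matched to $m_{v_2^a}$, which lies second-from-last on her list, so she prefers \emph{every} social neighbor $m_{v_1^c}$ with $v^c\in N_G(v^a)$ to her partner. Since $(m_{v_1^c},w_{v_2^a})$ is an edge of $G'$, social stability forces $m_{v_1^c}$ to be matched to a woman he strictly prefers to $w_{v_2^a}$; but the only women he ranks above $w_{v_2^a}$ are his top choice $w_{v_2^c}$ and the women $w_{v_2^d}$ with $v^d\in N_G(v^c)$ and $d<a$. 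I will call this property $(\star)$. Taking $v^a,v^c$ to be the two endpoints of any edge inside $S$, the escape $w_{v_2^c}$ is unavailable (it is held by $m_{v_2^c}$), so $m_{v_1^c}$ must be matched \emph{across a gadget} to some $w_{v_2^d}$ with $d<a$, where necessarily $v^d\notin S$. Note also that $(\star)$ disposes of the easy case: if such an $m_{v_1^c}$ were single or matched to $w_{v_1^c}$, then $(m_{v_1^c},w_{v_2^a})$ would itself be a social blocking pair, contradicting social stability.

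It remains to convert this web of forced cross-matches into an outright contradiction, which I expect to be the main obstacle. I would argue by an extremal choice that exposes a woman whose required ``escapes'' are infeasible. Concretely, let $w_{v_2^\beta}$ be the badly-matched woman (one not matched to her top choice $m_{v_1^\beta}$) of \emph{smallest} index; such a woman exists once $S\neq\emptyset$, since each $w_{v_2^i}$ with $v^i\in S$ is badly matched. Applying the reasoning behind $(\star)$ to $w_{v_2^\beta}$ shows that every social neighbor $m_{v_1^c}$ that she prefers to her partner must be matched to his own top $w_{v_2^c}$: the alternative escape would send him to some $w_{v_2^d}$ with $d<\beta$, making $w_{v_2^d}$ a badly-matched woman of smaller index and contradicting minimality. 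This is precisely the clash we want, because if $\beta$'s partner were $m_{v_2^\beta}$ (that is, $v^\beta\in S$) and $\beta$ had a neighbor $v^c\in S$, then $m_{v_1^c}$ would be forced onto $w_{v_2^c}$ even though $v^c\in S$ keeps $w_{v_2^c}$ occupied by $m_{v_2^c}$. The technical crux is to choose the extremal object so that this clash is guaranteed to fire: one tracks the cascade of forced cross-matches generated by a putative $S$--$S$ edge through $(\star)$, and uses minimality of the index reached to show that some forced man has no admissible partner and hence blocks. This extremal bookkeeping, driven entirely by the enumeration, is the only delicate step and mirrors the corresponding argument of Hoefer and Wagner~\cite{hoefer2012locally}.

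Once $S$ is shown to be independent, the bound $|S|\ge r$ from the first paragraph yields an independent set of size $r$ in $G$, completing the proof.
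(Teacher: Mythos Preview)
Your counting argument for $|S|\ge r$ is clean and correct, and property~$(\star)$ is stated and justified correctly. However, the independence of $S$ is not actually proved: you explicitly defer the ``technical crux'' to an unspecified extremal bookkeeping argument, and the particular extremal object you propose does not close the gap. If $w_{v_2^\beta}$ is the badly-matched woman of smallest index, the conclusion you draw---that every social neighbour $m_{v_1^e}$ she prefers to her partner is matched to $w_{v_2^e}$---is correct, but $\beta$ need not lie in $S$, nor need $v^\beta$ have a neighbour in $S$. The cascade you start from an $S$--$S$ edge $(v^a,v^c)$ only tells you $\beta<a$; nothing forces the clash ``$m_{v_1^c}$ must sit on $w_{v_2^c}$ while $v^c\in S$'' to occur at index $\beta$. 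So the argument, as written, does not terminate in a contradiction.

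The paper avoids this difficulty by first \emph{normalising} $\mu$ rather than working with it directly. It (i) modifies $\mu$ so that every $m_{v_1^i}$ is matched (repeatedly rematching a single $m_{v_1^i}$ to $w_{v_2^i}$, which preserves social stability and size), and then (ii) proves, via the minimal-index technique you allude to, a structural claim: whenever $(m_{v_1^i},w_{v_2^j})\in\mu$ with $i\neq j$, necessarily $(m_{v_1^j},w_{v_2^i})\in\mu$ as well. This pairing lets one replace each cross-matched dyad by the two ``straight'' matches $(m_{v_1^i},w_{v_2^i}),(m_{v_1^j},w_{v_2^j})$ without losing social stability or size. After normalisation there are no cross-matches at all, so for $v^i\in S$ the man $m_{v_1^i}$ is forced onto $w_{v_1^i}$, and then $(\star)$ immediately gives independence: every neighbour $v^j$ has $m_{v_1^j}$ on $w_{v_2^j}$, hence $v^j\notin S$. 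The extremal argument is thus spent on the pairing claim (where it does bottom out cleanly), not on independence directly. If you want to keep your direct approach, you would need either to prove an analogue of this pairing claim without the normalisation step, or to find a different extremal quantity whose minimiser is guaranteed to interact with the $S$--$S$ edge; neither is supplied in your sketch.
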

\begin{proof}
First of all we can assume that $\mu$ is such that all $m_{v_1^i}$ are matched. If there is an $m_{v_1^i}$ that is not matched we can just take $w_{v_2^i}$ and match her to him. If some other $m_{v_1^j}$ gets single by this action we just do it repeatedly until no such $m_{v_1^i}$ exists. This will not break social stability of the matching nor decrease the size. 
\begin{claim}
If there exists $i\neq j$ such that $(m_{v_1^i}, w_{v_2^j})\in \mu$, then $(m_{v_1^j}, w_{v_2^i})\in \mu$ as well.
\end{claim}
\begin{proof}
Notice first, that by our construction and the fact that $(m_{v_1^i}, w_{v_2^j})\in \mu$ we get that $v^j\in N_G(v^i)$. And hence $(m_{v_1^j}, w_{v_2^i})\in E(G')$.  Suppose now that the statement is not true and take the minimum $i$ such that $(m_{v_1^i}, w_{v_2^j})\in \mu$ for some $j\neq i$. We first have that $i<j$. For suppose that $j<i$, then if $m_{v_1^j}$ is matched to some $w_{v_2^k}$, for some $k\neq j$, that contradicts the minimal choice of $i$. Hence it must be that $m_{v_1^j}$ is matched to $w_{v_1^j}$. Since $(m_{v_1^j}, w_{v_2^i})\in E(G')$, and $m_{v_1^j}$ prefers $w_{v_2^i}$ over $w_{v_1^j}$,  in order for $\mu$ to be socially stable it must be that $w_{v_2^i}$ is matched to someone that she prefers to $m_{v_1^j}$ and that can only be some $m_{v_1^k}$ for some $k<j<i$, which contradicts the minimal choice of $i$. So indeed, it must be  that $i<j$.\\
Since $(m_{v_1^j},w_{v_2^i})\notin \mu$ and they are connected with an edge in $G'$, in order for $\mu$ to be socially stable it needs to be that at least one of $m_{v_1^j}$ and $w_{v_2^i}$ gets a match under $\mu$ that he/she prefers to the other.

Suppose that $w_{v_2^i}$ is matched to some $m_{v_1^k}$ for some $k<j$ (these are the only men that $w_{v_2^i}$ prefers from $m_{v_1^j}$, since her top choice, $m_{v_1^i}$, is matched with someone else). Then $m_{v_1^i}$ prefers $w_{v_2^k}$ to $w_{v_2^j}$ (since $k<j$) and $w_{v_2^k}$ prefers $m_{v_1^i}$ to her current match unless her current match is a $m_{v_1^m}$ for some $m\neq k$ and $m<i$. But that would contradict the minimal selection of $i$. So it can't be the case that $w_{v_2^i}$ is matched under $\mu$ to someone she prefers to $m_{v_1^j}$.

Suppose now that $m_{v_1^j}$ is matched to some $w_{v_2^k}$ for some $k<i$. That would mean that $m_{v_1^k}$ is not matched to $w_{v_2^k}$ and if he is matched to some $w_{v_2^m}$ for some $m\neq k$, that would contradict the minimal choice of $i$. Hence $m_{v_1^k}$ must be matched to $w_{v_1^k}$. Since $m_{v_1^j}$ is matched to $w_{v_2^k}$ and $k\neq j$, by the way we constructed the preferences, that means that $(v^j, v^k)\in E(G)$ and hence $(m_{v_1^k},w_{v_2^j})\in E(G')$. But that contradicts the social stability of $\mu$ because $w_{v_2^j}$ prefers $m_{v_1^k}$ to $m_{v_2^i}$ (since $k<i$), $m_{v_1^k}$ prefers $w_{v_2^j}$ to $w_{v_1^k}$ (since $w_{v_1^k}$ is his last choice). Hence, it can't be the case that $m_{v_1^j}$ is matched under $\mu$ to someone he prefers to $w_{v_2^i}$.

So it needs to be that $(m_{v_1^j}, w_{v_2^i})\in \mu$ \end{proof}

We can now take all dyads of pairs of the form $\{(m_{v_1^i}, w_{v_2^j}), (m_{v_1^j}, w_{v_2^i})\} $ for $i\neq j$ that belong to $\mu$ and replace them with $\{(m_{v_1^i}, w_{v_2^i}),(m_{v_1^j}, w_{v_2^j})\}$. This will not break social stability of $\mu$ nor decrease it's size. We take now $S=\{v^i \mid m_{v_2^i}\in \mu\}$. Since all $n$ of $m_{v_1^i}$ are matched in $\mu$ there are $k$ couples that contain some $m_{v_2^i}$ and hence $|S|=|\mu|-n=k$. $S$ is an independent set because $m_{v_2^i}$ can only be matched to $w_{v_2^i}$ and hence $m_{v_1^i}$ is matched to $w_{v_1^i}$. The latter pairs are only stable if for every $v^j\in N_G(v^i)$ it holds that $m_{v_1^j}$ is matched to $w_{v_2^j}$ and hence $m_{v_2^j}\notin \mu$ which means that $v^j\notin S$ and hence $S$ is indeed an Independent Set. 
\end{proof}

Lemmas \ref{istosocstable} and \ref{socstabletois} combined give us the following theorem.

\begin{theorem}\label{socstablenphard}
The \textsc{Soc-Stable} problem is {\sf NP}-hard.
\end{theorem}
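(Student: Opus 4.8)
The plan is to derive the theorem as an immediate corollary of the two lemmas by observing that the construction in this section is a polynomial-time many-one reduction from Independent Set to (the decision version of) \textsc{Soc-Stable}. First I would set up the decision version explicitly: given an instance $I$ of \textsc{Soc-Stable} and an integer $t$, decide whether $I$ admits a socially stable matching of cardinality at least $t$. Independent Set asks, given $(G,k)$, whether $G$ has an independent set of size at least $k$; since this problem is {\sf NP}-hard, it suffices to exhibit a polynomial-time reduction to the \textsc{Soc-Stable} decision problem.

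Next I would verify that the map $(G,k)\mapsto (I,t)$ is computable in polynomial time. Given $G=(V,E)$ with $n=|V|$, the construction produces $4n$ agents ($m_{v_1^i},m_{v_2^i},w_{v_1^i},w_{v_2^i}$ for each $v^i$), a social graph $G'$ whose edges are determined by the neighborhoods $N_G(v^i)$, and preference lists of length $O(n)$ each; all of this is clearly computable in time polynomial in the size of $G$. I would then set the target cardinality to $t=n+k$.

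The correctness of the reduction follows by chaining the two lemmas in opposite directions. If $(G,k)$ is a yes-instance, then $G$ has an independent set of size $k$, so by Lemma~\ref{istosocstable} the instance $I$ has a socially stable matching of size $n+k=t$, making $(I,t)$ a yes-instance. Conversely, if $(I,t)$ is a yes-instance, then $I$ has a socially stable matching of size at least $n+k$; writing its size as $n+r$ with $r\ge k$ and applying Lemma~\ref{socstabletois} yields an independent set of size $r\ge k$ in $G$, so $(G,k)$ is a yes-instance. (Here one should note that the maximum cardinality of any socially stable matching in $I$ is exactly $n+\alpha(G)$, where $\alpha(G)$ is the independence number, which is what Lemmas~\ref{istosocstable} and~\ref{socstabletois} jointly establish.) Hence $(G,k)$ is a yes-instance of Independent Set if and only if $(I,t)$ is a yes-instance of \textsc{Soc-Stable}, completing the reduction.

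There is essentially no remaining obstacle at this level: the entire technical difficulty of the argument was already discharged in the proofs of the two lemmas, particularly the reverse direction (Lemma~\ref{socstabletois}) with its minimality argument. The only points that genuinely need care in the theorem itself are the routine verification that the reduction runs in polynomial time and the bookkeeping that the inequality directions on the sizes ($r\ge k$) line up correctly with the lemma statements; both are straightforward given what has been proved.
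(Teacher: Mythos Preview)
Your proposal is correct and matches the paper's approach exactly: the paper simply states that Lemmas~\ref{istosocstable} and~\ref{socstabletois} combined yield the theorem, and what you have written is precisely the routine unpacking of that sentence into an explicit polynomial-time many-one reduction with target $t=n+k$. There is nothing to add or correct.
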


\par Austrin et al. in \cite{austrin2009inapproximability} show that assuming the unique games conjecture Independent Set is hard to approximate within a factor of $O\left(\frac{d}{\log ^2d}\right)$ for Independent sets of size $k=\left(\frac{1}{2}-\Theta\left(\frac{\log(\log d)}{\log d}\right)\right)n$, where $n$ is the size of the vertex set and $d$ is an upper bound on the degree. Combing that result and Lemmas \ref{istosocstable} and \ref{socstabletois} we get the following corollary.

\begin{corollary}
Under the Unique Game Conjecture the \textsc{Soc-Stable} problem cannot be approximated within a $\frac{3}{2}-\epsilon$, for any constant $\epsilon$.
\end{corollary}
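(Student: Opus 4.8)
The plan is to show that the reduction underlying the preceding two lemmas transfers the inapproximability gap of Independent Set essentially verbatim, and that this gap tends to exactly $\frac{3}{2}$ as the degree bound grows. First I would observe that Lemmas~\ref{istosocstable} and~\ref{socstabletois} together give an exact identity: on the instance $I$ produced from a graph $G$ on $n$ vertices, the maximum socially stable matching has size exactly $n + \alpha(G)$, where $\alpha(G)$ is the independence number of $G$ (Lemma~\ref{istosocstable} gives the ``$\geq$'' direction and Lemma~\ref{socstabletois} the ``$\leq$'' direction). Since the reduction is computable in polynomial time, any approximation guarantee for \textsc{Soc-Stable} pulls back to a statement about recovering $\alpha(G)$ through this affine shift.

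Next I would plug in the Austrin et al.\ hardness. Writing $\gamma = \Theta(\log\log d / \log d)$, their result makes it UGC-hard to distinguish the YES case $\alpha(G) \geq (\tfrac{1}{2} - \gamma)n$ from the NO case $\alpha(G) \leq (\tfrac{1}{2}-\gamma)n \cdot \frac{\log^2 d}{O(d)}$, which I abbreviate as $\delta n$, over graphs of maximum degree at most $d$. Applying the identity, in the YES case the optimum of \textsc{Soc-Stable} on $I$ is at least $(\tfrac{3}{2} - \gamma)n$, while in the NO case it is at most $(1 + \delta)n$. The key point is that both $\gamma \to 0$ and $\delta \to 0$ as $d \to \infty$, so the multiplicative gap between the two cases, $\frac{3/2 - \gamma}{1 + \delta}$, tends to $\frac{3}{2}$.

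Finally I would run the contrapositive. Suppose $A$ is a polynomial-time $(\tfrac{3}{2} - \epsilon)$-approximation for \textsc{Soc-Stable}; on a YES instance it must output a matching of size at least $\frac{1}{3/2 - \epsilon}\,(\tfrac{3}{2}-\gamma)n$, whereas on any NO instance no socially stable matching can exceed $(1+\delta)n$. Thus whenever $\frac{3/2 - \gamma}{3/2 - \epsilon} > 1 + \delta$, the size of $A$'s output strictly separates the two cases, giving a polynomial-time distinguisher for the UGC-hard promise problem. Since the left-hand side tends to $\frac{3/2}{3/2 - \epsilon} > 1$ and the right-hand side tends to $1$ as $d \to \infty$, this strict inequality holds for every sufficiently large $d$ depending only on $\epsilon$, and I would fix one such $d$, reaching a contradiction with the UGC. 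I expect the only delicate point to be the bookkeeping of the vanishing $\gamma$ and $\delta$ terms, i.e.\ certifying that for each fixed constant $\epsilon$ a single degree bound $d$ makes the separation strict; everything else is the affine substitution $\mathrm{OPT}(I) = n + \alpha(G)$ together with the fact that the reduction runs in polynomial time, so that bounded-degree Independent Set instances map to \textsc{Soc-Stable} instances on which the hardness may be invoked.
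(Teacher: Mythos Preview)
Your proposal is correct and follows essentially the same approach as the paper: both invoke the affine identity $\mathrm{OPT}(I)=n+\alpha(G)$ from Lemmas~\ref{istosocstable} and~\ref{socstabletois}, plug in the Austrin et al.\ gap, and observe that the resulting ratio tends to $\tfrac{3}{2}$. The only cosmetic difference is that the paper sets $d=\delta n$ (letting the degree bound grow with $n$) and then writes the gap as a single displayed fraction in terms of $n$, whereas you fix $d$ to be a sufficiently large constant depending on $\epsilon$ and run the contrapositive explicitly; either choice drives both error terms to zero and yields the same $\tfrac{3}{2}-\epsilon$ conclusion.
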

\begin{proof}
The proof follows from the relationship between {\sc Soc-Stable} and Independent Sets shown by Lemmas \ref{istosocstable} and \ref{socstabletois} and the result from \cite{austrin2009inapproximability}. It's exactly the same as the proof of Corollary 4 of \cite{hoefer2012locally}. We set $d=\delta n$ for some constant $\delta>0$ and we get that finding the maximum socially stable matching is hard to approximate within a factor of
\begin{equation}
\frac{n+\left(\frac{1}{2}-\Theta\left(\frac{\log(\log n)}{\log n}\right)\right)n}{n+\left(\frac{1}{2}-\Theta\left(\frac{\log(\log n)}{\log n}\right)\right)n\cdot O\left(\frac{\log n}{n}\right)}\geq \frac{3}{2}-\epsilon
\end{equation}

for sufficiently large $n$.
\end{proof}

\section{Conclusions and Future Work}
In this paper we introduced a generalized model of stable matchings by augmenting the input of a standard matching problem with a graph $G$, which represents the communication between agents in the two sides of the market.
Driven by the intuition the only agents that have some kind of direct communication with each other can coordinate a possible deviation, we defined the notion of {\it social blocking pair} to be a blocking pair that is also connected with an edge in $G$. That gave us an extra degree of flexibility when trying to find maximum {\it socially stable matching}, i.e. matchings that don't have any social blocking pairs, because now, in contrast to the classical stable matching problem, only a subset of the agents can be potentially blocking. We defined the problem {\sc Soc-Stable} to be the problem of finding a maximum size socially stable matching and showed that, assuming the unique games conjecture, this problem is hard to approximate within a ratio of $\frac{3}{2}-\epsilon$, for any constant $\epsilon$. Finally, we accompanied the hardness result with a matching lower bound by showing a $\frac{3}{2}$-approximation algorithm

A lot of interesting future directions arise from this work. First of all it would be interesting to see special cases of the social graph $G$ and/or the preferences of the agents for which the {\sc Soc-Stable} problem is polynomial-time solvable. We believe that when the graph is a tree as well as when there is a master preference list in at least one side of the market, the problem becomes polynomial-time solvable.

One could also look at different optimization problems, other than finding the maximum socially stable matching. For example, if we are allowed to ``block'' at most a certain number of edges of $G$, i.e. remove them from the graph, which edges should we remove in order to obtain as large socially stable matchings as possible?

\bibliography{soc_stable_arxiv}{}
\bibliographystyle{plain}

\end{document}